\documentclass[11pt]{article}
\include{GrandMacros}

\usepackage{graphicx,verbatim,array,multicol, palatino}
\usepackage{amsthm, amsmath, amssymb,  setspace,natbib}
\usepackage{epsfig, url,epstopdf}
\usepackage{color}
\usepackage[super]{cite}
\usepackage{indentfirst,natbib}

\usepackage{JASA_manu}
\bibliographystyle{plainnat}

%\definecolor{red}{rgb}{0,0,0}
\definecolor{red}{rgb}{1,0,0}
\definecolor{blue}{rgb}{0,0,1}
\definecolor{green}{rgb}{0,0.6,0.4}

\DeclareMathOperator{\tr}{tr}

\begin{document}

\title{Change Point Detection in Correlation Networks}

\author{Ian Barnett\thanks{Ian Barnett is a Postdoctoral Researcher and Jukka-Pekka Onnela is an Assistant Professor, Department of Biostatistics, Harvard T.H. Chan School of Public Health, 677 Huntington Avenue, Boston, MA 02115. IB is supported by PHS Grant Number 2T32ES007142-31. The authors would like to thank Dr. Stephen Maher for his review of the fMRI analysis.} \hspace{0.3pc} and \hspace{0.2pc} Jukka-Pekka Onnela\footnotemark[1]}

\date{}

\maketitle

\newpage

\begin{abstract}
Many systems of interacting elements can be conceptualized as networks, where network nodes represent the elements and network ties represent interactions between the elements. In systems where the underlying network evolves, it is useful to determine the points in time where the network structure changes significantly as these may correspond to functional change points.  We propose a method for detecting change points in correlation networks that, unlike previous change point detection methods designed for time series data, requires no distributional assumptions. We investigate the difficulty of change point detection near the boundaries of data in correlation networks and study the power of our method and a competing method through simulation. We also show the generalizable nature of the method by applying it to stock price data as well as fMRI data.

\textbf{Key Words:} Change point detection; Time series; Similarity networks; Brain networks; Stock return networks. 
\end{abstract}

\section{Introduction}

Many systems of scientific and societal interest are composed of a large number of interacting elements, examples ranging from proteins interacting within each living cell to people interacting with one another within and across societies. These and many other systems can be conceptualized as networks, where network nodes represent the elements in a given system and network ties represent interactions between the elements. Network science and network analysis are used to analyze and model the structure of interactions in a network, an approach that is commonly motivated by the premise that network structure is associated with the dynamical behavior exhibited by the network, which in turn is expected to be associated with its function. In many cases, however, network structure is not static but instead evolves in time. This suggests that given a sequence of networks, it would be useful to determine points in time where the structure of the network changes in a non-trivial manner. Determining these points is known as the network change point detection problem. Given the connection between network structure and function, it seems reasonable to conjecture that a change in network structure may be coupled with a change in network function. Consequently, detecting structural change points for networks could be informative about functional change points as well. 

In this paper, we consider the change point detection problem for correlation networks. These networks belong to a class of networks sometimes called similarity networks and they are obtained by defining the edges based on some form of similarity or correlation measure between each pair of nodes \citep{onnela2012taxonomies}.
Examples of correlation networks appear in many financial and biological contexts, such as stock market price and gene expression data \citep{onnela2004clustering,mizuno2006correlation,bhan2002duplication,kose2001visualizing,mantegna1999hierarchical}. In general, when evaluating correlation networks, the full data is used to estimate the correlations between the nodes. When using this approach for longitudinal data, it is sometimes implied that the network structure is the same over time. This assumption may however be inaccurate in some cases. For example, in \citet{onnela2004clustering} a stock market correlation network is created from almost two decades of stock prices. In reality the relationship between the stocks, and therefore the structure of the underlying network, likely change over such a long period of time, an issue that was addressed in \citet{onnela2004clustering} by dividing the data into shorter time windows. Similarly, in functional magnetic resonance imaging (fMRI) trials it is likely that the brain interacts differently during different tasks \citep{keightley2003fmri}, or possibly even within a given task,  so it may be inaccurate to assume a constant brain activity correlation network in trials with multiple tasks.

Suppose that a network is constant or may be assumed so until a known point in time before undergoing sudden change. In this case the underlying data should be split up at the change point into two parts, and two separate correlation networks should be constructed from the two subsets of the data. In reality the location of the change point, or possibly several change points, is not known \textit{a priori} and must also be inferred from the data. This problem belongs to a wider class of so-called change point detection problems, which has been studied in the field of process control. When the observed node characteristics are independent and normally distributed, methods exist for general time series data to detect changes in the multivariate normal mean or covariance \citep{hawkins2005change,zamba2006multivariate,lowry1992multivariate}.

There have been some promising efforts at change point detection for structural networks, but in this case the actual network is observed over time rather than relying on correlations of node characteristics that are used to construct the network \citep{lindquist2007modeling,lindquist2008statistical,peel2014detecting,akoglu2010event,mcculloh2011detecting,tang2013attribute}. If a network is first inferred from correlations, then these methods could be applied. However, inferring networks from correlations is not trivial. When thresholding correlations to determine the adjacency matrix, which is a common approach, the inferred networks tend to be highly sensitive to the chosen threshold. For this reason, these methods cannot be directly applied to correlation networks without first solving the problem of inferring the correlation networks themselves. Therefore, despite this large body of methods developed for change point detection for both time series data and for networks, there is a need for a change point detection method specifically for correlation networks that is not hampered by stringent distributional assumptions.

In this paper we propose a computational framework for change point detection in correlation networks that is free from distributional assumptions. This framework offers a novel and flexible approach to change point detection. The change point detection method suggested by \citet{zamba2009multivariate,lowry1992multivariate} is adapted to our framework and its power to detect change points is compared to our method using simulation. Also, we investigate the general difficulty of change point detection near the boundaries of the data both analytically and through simulation. Finally, we apply our framework to both stock market and fMRI correlation network data and demonstrate its success and limitations for detecting functionally relevant change points.

\section{Method}

\subsection{Notation}

Assume that the system under investigation consists of a fixed number of $n$ nodes with characteristics observed at $T$ distinct time points, where the observed characteristics are $Y_{n \times T} = [Y_1,...,Y_T]$ where $Y_j = [Y_{1j},...,Y_{nj}]^T$ is the $j$th $n$-dimensional column vector of $Y$ and we assume that $Y_j \sim f(\cdot \mid \Sigma_j)$ is an unknown function with all columns of $Y$ i.i.d. (independent and identically distributed)  and where $\mbox{cov}(Y_j)=\Sigma_j$. We also assume that the rows of $Y$, corresponding to observations at individual nodes, are centered to have temporal mean $0$ and scaled to have unit variance. Note that the centering and scaling, resulting in standardized observations for each node, can always be performed.

We define a set of diagonal matrices $D(i,j)_{T \times T}$ for $1 \leq i < j \leq T$ such that
$$D(i,j)_{kk} = \left\{\begin{array}{cc}
1/(j-i+1) & \text{if } i\leq k \leq j \\
0	& \mbox{otherwise}	\\
\end{array}\right.$$

We define the covariance matrix $S(i,j)_{n \times n}$ on the subset of the data ranging from the $i$th column to the $j$th column, i.e., from time point $i$ to time point $j$ ($1 \leq i < j \leq T$), to be:

\begin{equation}\label{empcor}
S(i,j) = \sum_{k=i}^j Y_kY_k^T/(j-i+1) = YD(i,j)Y^T
\end{equation}

In order to detect a change point, we wish to find the value of $k$ in the range $(1+\Delta,T-\Delta)$  that maximizes the differences between $S(1,k)$ and $S(k+1,T)$, where $\Delta$ is picked large enough to avoid ill-conditioned covariance matrices ($\Delta > n$). The rationale for this approach is that if there were a change point in the data, the sample correlation matrices on each side of the change point ought to be different in structure. We choose the squared Frobenius norm as our metric for the distance between two matrices. Let our matrix distance metric be:
$$d(k) = ||S(1,k)-S(k+1,T)||_F^2 = \tr\{[S(1,k)-S(k+1,T)]^T[S(1,k)-S(k+1,T)]\},$$
where $\tr$ is the matrix trace operator.
We wish to test the hypotheses:
$$
\begin{array}{rl}
H_0	:& \Sigma_j = \Sigma \;\;\;\; \forall \; j	\\
H_A :& \mbox{There exists $k$ such that } \Sigma_j = \left\{\begin{array}{cc}
\Sigma_1 & \mbox{ for } j \leq k	\\
\Sigma_2 & \mbox{ for } j > k	\\
\end{array} \right.	
\end{array}
$$

\subsection{Existing methodology for change point detection}

Consider for a moment the case where the vector $Y_j$ is multivariate normal with expectation $\mu_1$ and variance-covariance matrix $\Sigma_1$ before the change point and expectation $\mu_2$ and variance-covariance $\Sigma_2$ after it. We denote this $Y_j \sim MVN(\mu_1,\Sigma_1)$ for $j\leq k$ and $Y_j \sim MVN(\mu_2,\Sigma_2)$ for $j>k$. A multivariate exponentially weighted moving average (EWMA) model has been developed for the detecting when $\mu_1$ changes to $\mu_2$ \citep{zamba2006multivariate,lowry1992multivariate}. A likelihood ratio test for detecting change points in the covariance matrix $\Sigma_1$ at a known fixed point $k$ was considered by \citet{zamba2009multivariate} and \citet{andersonw}. The likelihood ratio test statistic for detecting a change point at $k$ is

\begin{equation} \label{LRteststat}
\Lambda_k = \frac{|S(1,k)|^{\frac{k-1}{2}}\cdot |S(k+1,T)|^{\frac{T-k-1}{2}}}{|S(1,T)|^{\frac{T-1}{2}}},
\end{equation}
where $|\cdot|$ is the matrix determinant operator.

This approach makes the assumption that the location of the change point is known to be at $k$. In reality however the location of the change point is unknown, and the method can be extended to allow an unknown change point location by considering $\max_{1+\Delta\leq k \leq T-\Delta}\{\Lambda_k\}$. When the $Y_j$ are normally distributed then, for a fixed $k$, $-2\log(\Lambda_k)$  follows a chi-square distribution for large $T$ and for large $T-k$. Taking the maximum of $\Lambda_k$ over all possible $k$ results in a less tractable analytic distribution for the test statistic due to the necessity of correcting for multiple testing. For this reason, along with the fact that we do not wish to restrict ourselves to these distributional and asymptotic assumptions, we note that \eqref{LRteststat} can be easily adapted to the framework developed in Section \ref{ourmethod} by defining $d(k) = \Lambda_k$ and proceeding as usual. This suggests that different definitions of our matrix distance metric $d(k)$ can lead to substantially different results even in the same general framework. This idea is pursued further in Section \ref{normchoice}.

\subsection{Simulation based change point detection} \label{ourmethod}

It is of interest to establish a method of change point detection that does not require any distributional assumptions on $Y_j$, and we develop such a method in this section. Our approach is based on the bootstrap which offers a computational alternative that can well approximate the distribution of $Y_j$ through resampling. Under $H_0$, if the $Y_j$ are all independent and come from the same distribution $Y_j \sim f(\cdot \mid \Sigma)$ for all $1\leq j \leq T$, then bootstrapping the columns of $Y$ is appropriate. Though $f(\cdot \mid \Sigma)$ is unknown, we approximate it with the empirical distribution $\hat{f}(\cdot \mid \Sigma)$ which gives each observed column vector $Y_j$ an equal point mass of $1/T$. This is equivalent to resampling from the columns of $Y$ with replacement.

For many time series applications there may be autocorrelation present between the columns of $Y$. In this case resampling the columns of $Y$ would break the correlation structure and lead to bias in the approximation of the null distribution. To account for this autocorrelation in the resampling procedure, we use the sieve bootstrap \citep{buhlmann1997sieve}. In particular, for correlated data the $Y^{(b)}$ are generated for  autocorrelation of order $s$ by fitting the model 
\begin{equation}\label{sieveeq}
Y_j= \sum_{k=1}^s \hat{\phi}_k Y_{j-k} + \hat{\epsilon}_j
\end{equation}
 for each $j>s$. The $\hat{\phi}_k$ are estimated from the Yule-Walker equations, and used to solve for the $\hat{\epsilon}_j$s through equation \eqref{sieveeq}. The bootstrap residuals , $\hat{\epsilon}_j^{(b)}$, are resampled from all the $\hat{\epsilon}_j$s with replacement. This generates the bootstrapped $Y_j^{(b)}$ according to $Y_j^{(b)}= \sum_{k=1}^s \hat{\phi}_k Y_{j-k} + \hat{\epsilon}_j^{(b)}$.

Let $Y^{(b)}$ be one of the bootstrap resamples from $Y$, where each $Y_j^{(b)}$ are generated by bootstrapping from $\hat{f}(\cdot,\Sigma)$ in the case of independence or from the sieve bootstrap for correlated data. This is repeated for $b \in \{1,\dots,B\}$ where $B$ is the total number of bootstrap samples. $\Delta$ is a ``buffer'' that limits the change point detection from searching too close to the boundaries of data. We recommend $\Delta \approx n$. In the case where a change point location $k$ is closer than $\Delta$ to either $1$ or $T$, the change point will not be detected but, as will be seen in Section \ref{boundarysection}, these cases are near impossible to detect regardless of how small we make $\Delta$. For each $k \in \{1+\Delta,\dots,T-\Delta\}$,  $S^{(b)}(1,k)$, $S^{(b)}(k+1,T)$, and $d^{(b)}(k)$ are calculated where $S^{(b)}(i,j)=Y^{(b)}D(i,j)Y^{(b)T}$ and $d^{(b)}(k)=||S^{(b)}(1,k)-S^{(b)}(k+1,T)||_F$. Then $\hat{\mu}_{0}(k) = \frac{1}{B}\sum_{b=1}^B d^{(b)}(k)$ and $\hat{\sigma}_{0}^2(k)= \frac{1}{B-1}\sum_{b=1}^B (d^{(b)}(k)-\hat{\mu}_{0}(k))^2$ are calculated for each $k \in \{1+\Delta,\dots,T-\Delta\}$.

A z-score is then calculated for each potential change point $k \in \{1+\Delta,\dots,T-\Delta\}$ as

$$z^{(b)}(k) = \frac{d^{(b)}(k)-\hat{\mu}_{0}(k)}{\sqrt{\hat{\sigma}_{0}^2(k)}}$$

The change point occurs for the value of $k$ for which the z-score is largest, so we let $Z^{(b)} = \max_{k}\{z^{(b)}(k)\}$ for each bootstrap sample $b$. This is also performed on the observed data, with $z(k) = [d(k)-\hat{\mu}_{0}(k)]/\sqrt{\hat{\sigma}_{0}^2(k)}$ and $Z = \max_{k}\{z(k)\}$ being the test statistic. The corresponding p-value obtained from bootstrapping is

$$\mbox{p-value} = \frac{1}{B}\bigg|\left\{b: Z^{(b)} \geq Z\right\}\bigg|,$$
where $|\cdot|$ is the cardinality of the set. If the p-value is significant, i.e., if sufficiently few bootstrap replicates $Z^{(b)}$ exceed Z, then we reject $H_0$ and declare a change point exists for the value of $k$ with the highest z-score, i.e., at $\arg\max_k z(k)$.

It is also often the case that there exist more than one change point. In this case the data is split into two segments, one before the first change point and the other after it, and the bootstrap procedure is then repeated separately for each of the two segments. If a significant change point is found on a segment, then that segment is split in two again and this process is repeated until no more statistically significant change points are found. In practice, this procedure terminates after a small number of rounds because each iteration on average halves the amount of data which greatly reduces power to detect a change point after each subsequent iteration.

\subsection{Difficulty of detection near the boundary} \label{boundarysection}

We alluded above to the difficulty of detecting change points near the boundaries of the data, and will now investigate this issue in more detail. When a change point $k$ is very close $1$, then the empirical covariance matrix $S(1,k)$ is constructed using a very small amount of data and its estimate is unstable with high variance. Similarly, when $k$ is very close to $T$, $S(k+1,T)$ suffers from the same problem. This makes change point detection hard: if the empirical covariance matrix is highly variable, the noise from the estimation of the covariance matrices can make any possible differences between $\Sigma_1$ and $\Sigma_2$ statistically difficult to detect. 

In an attempt to quantify just how difficult of a problem change point detection is near the boundary, we find the analytic form of $E[d(k)]$ under $H_0$ in the case of normally distributed $Y_{ij}$.

\newtheorem*{expval}{Theorem 1}
\begin{expval}
Let $Y_j \sim MVN(0,\Sigma)$ for $j \in \{1,...,T\}$ all i.i.d., then for $d(k) = ||S(1,k)-S(k+1,T)||_F^2$ for any $k \in \{2,...,T-1\}$ we have
\begin{equation} \label{expeq}
E[d(k)] = 	\left(\frac{1}{k} + \frac{1}{T-k}\right)(\tr(\Sigma^2)+\tr(\Sigma)^2)
\end{equation} 
\end{expval}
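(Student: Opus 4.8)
The plan is to expand $d(k)$ through the Frobenius inner product, dispatch the cross term using independence of the two sample covariance blocks, and reduce the two remaining ``diagonal'' terms to a single Wishart fourth-moment computation handled by Isserlis' theorem.

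First I would write $A = S(1,k)$ and $B = S(k+1,T)$; both are symmetric, so
$$d(k) = \|A - B\|_F^2 = \tr(A^2) - 2\tr(AB) + \tr(B^2),$$
and hence $E[d(k)] = E[\tr(A^2)] + E[\tr(B^2)] - 2E[\tr(AB)]$. For the cross term, observe that $A$ is a function of $Y_1,\dots,Y_k$ only and $B$ of $Y_{k+1},\dots,Y_T$ only; since the columns are i.i.d., $A$ and $B$ are independent, and each has mean $\Sigma$ because $E[Y_jY_j^T] = \Sigma$. Therefore $E[\tr(AB)] = \tr(E[A]\,E[B]) = \tr(\Sigma^2)$, contributing $-2\tr(\Sigma^2)$ to the total.

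The main work is the diagonal terms. Here $kA = \sum_{j=1}^k Y_jY_j^T$ is a Wishart matrix with $k$ degrees of freedom and scale $\Sigma$. Writing entries out, $\tr((kA)^2) = \sum_{a,b}\left(\sum_{j=1}^k Y_{ja}Y_{jb}\right)^2$, and I would split the resulting double sum over column indices into the coincident part ($j = \ell$) and the non-coincident part ($j \neq \ell$). The non-coincident part contributes $k(k-1)\Sigma_{ab}^2$ by independence of distinct columns, while the coincident part requires the Gaussian fourth moment $E[Y_{ja}^2 Y_{jb}^2] = \Sigma_{aa}\Sigma_{bb} + 2\Sigma_{ab}^2$, which is Isserlis' (Wick's) theorem. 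Summing over $a,b$ and using $\sum_{a,b}\Sigma_{aa}\Sigma_{bb} = \tr(\Sigma)^2$ and $\sum_{a,b}\Sigma_{ab}^2 = \tr(\Sigma^2)$ gives $E[\tr((kA)^2)] = k\tr(\Sigma)^2 + (k^2+k)\tr(\Sigma^2)$, so that
$$E[\tr(A^2)] = \frac{1}{k}\left(\tr(\Sigma^2) + \tr(\Sigma)^2\right) + \tr(\Sigma^2).$$
The identical argument with $k$ replaced by $T-k$ yields $E[\tr(B^2)] = \frac{1}{T-k}(\tr(\Sigma^2)+\tr(\Sigma)^2) + \tr(\Sigma^2)$.

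Adding the three contributions, the two ``$+\tr(\Sigma^2)$'' terms from $E[\tr(A^2)]$ and $E[\tr(B^2)]$ cancel against $-2\tr(\Sigma^2)$, leaving exactly $\left(\frac{1}{k} + \frac{1}{T-k}\right)(\tr(\Sigma^2) + \tr(\Sigma)^2)$, which is \eqref{expeq}. The only step that is not pure trace bookkeeping is the Wishart computation: one must be careful to separate coincident from non-coincident column indices before invoking Isserlis' theorem, since the fourth-moment identity is valid only within a single Gaussian vector $Y_j$. One could alternatively just cite the standard second-moment identity for Wishart matrices, but I would include the short index computation to keep the argument self-contained.
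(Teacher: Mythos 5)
Your proof is correct, and every intermediate quantity checks out: $E[\tr((kA)^2)] = k\tr(\Sigma)^2 + (k^2+k)\tr(\Sigma^2)$, the cross term $E[\tr(AB)] = \tr(\Sigma^2)$, and the cancellation of the two stray $\tr(\Sigma^2)$ terms against $-2\tr(\Sigma^2)$ all hold. The route is organized differently from the paper's, though the moment inputs are ultimately the same. The paper never splits $\|A-B\|_F^2$ into $\tr(A^2)-2\tr(AB)+\tr(B^2)$; instead it writes the whole quantity as a single sum $\sum_{i=1}^T\sum_{j=1}^T (Y_i^TY_j)^2 C_{ii}C_{jj}$ with $C = D(1,k)-D(k+1,T)$, and then feeds in $E[(Y_i^TY_i)^2] = 2\tr(\Sigma^2)+\tr(\Sigma)^2$ (quoted as the variance of a Gaussian quadratic form) and $E[(Y_i^TY_j)^2] = \tr(\Sigma^2)$ for $i\neq j$ (via conditioning on $Y_j$), leaving the rest to coefficient bookkeeping. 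Your decomposition buys a cleaner treatment of the cross-block terms -- independence of $A$ and $B$ gives $E[\tr(AB)]=\tr(E[A]E[B])$ in one line, where the paper must track the negative products $C_{ii}C_{jj}=-1/(k(T-k))$ through the sum -- and your entry-level Isserlis computation makes the fourth-moment identity self-contained rather than cited. The paper's version buys a single uniform expansion that treats all pairs $(i,j)$ on the same footing and avoids introducing Wishart language. Your caution about applying Isserlis only within a single column, separating coincident from non-coincident indices first, is exactly the right care to take; the paper's corresponding move is the implicit case split between $i=j$ and $i\neq j$ in its double sum.
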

\begin{proof}
We have that $d(k) = 	\tr\{[S(1,k)-S(k+1,T)]^2\} = \tr\{[Y^TY(D(1,k)-D(k+1,T))]^2\} = \sum_{i=1}^T\sum_{j=1}^T (Y_i^TY_j)^2C_{jj}C_{ii}$ where $C = D(1,k)-D(k+1,T)$.
From the variance of a Gaussian quadratic form we have that $E[(Y_i^TY_i)^2] = 2\tr(\Sigma^2)+\tr(\Sigma)^2$. Similarly for the covariance case when $i \neq j$, we have that $E[(Y_i^TY_j)^2)] = E[E[(Y_i^TY_j)^2|Y_j]] = E[Y_j^T\Sigma Y_j] = \tr(\Sigma^2)$. These combine to give us the result. For the more detailed algebra expanded upon, see Appendix \ref{expeqappendix}. 
\end{proof}

The implication of the Theorem (Equation \eqref{expeq}) is that the expected difference asymptotes to infinity as $k$ approaches $0$ or $T$ and is minimized when $k=\lfloor T/2\rfloor$. Although this result assumes multivariate normal data, we expect that the qualitative nature of the result generalizes beyond the normal distribution. The increase in $E[d(k)]$ is confirmed through simulation under $H_0$ and demonstrated in Figure \ref{edktheoVSemp}. The implication is that the noise in the estimation of the covariance matrices on both sides of the change point is minimized when both $S(0,k)$ and $S(k+1,T)$ have sufficient data for their estimation. When $k$ is close to $0$, then even though $S(k+1,T)$ has low variability, the large increase in variability of $S(0,k)$ leads to an overall noisier outcome. This demonstrates that the strength of the method is only as strong as its weakest estimate. For the purposes of study design and data collection, if we suspect that a change point occurs at a certain location, perhaps for theoretical reasons or based on past studies, we need to ensure that there is sufficient data collected both before and after the suspected change point if we are to have any hope of detecting it.

\begin{figure}[!ht] 
\centerline{\includegraphics[scale=.7,trim=0 0.6cm 0 1.5cm,clip=true]{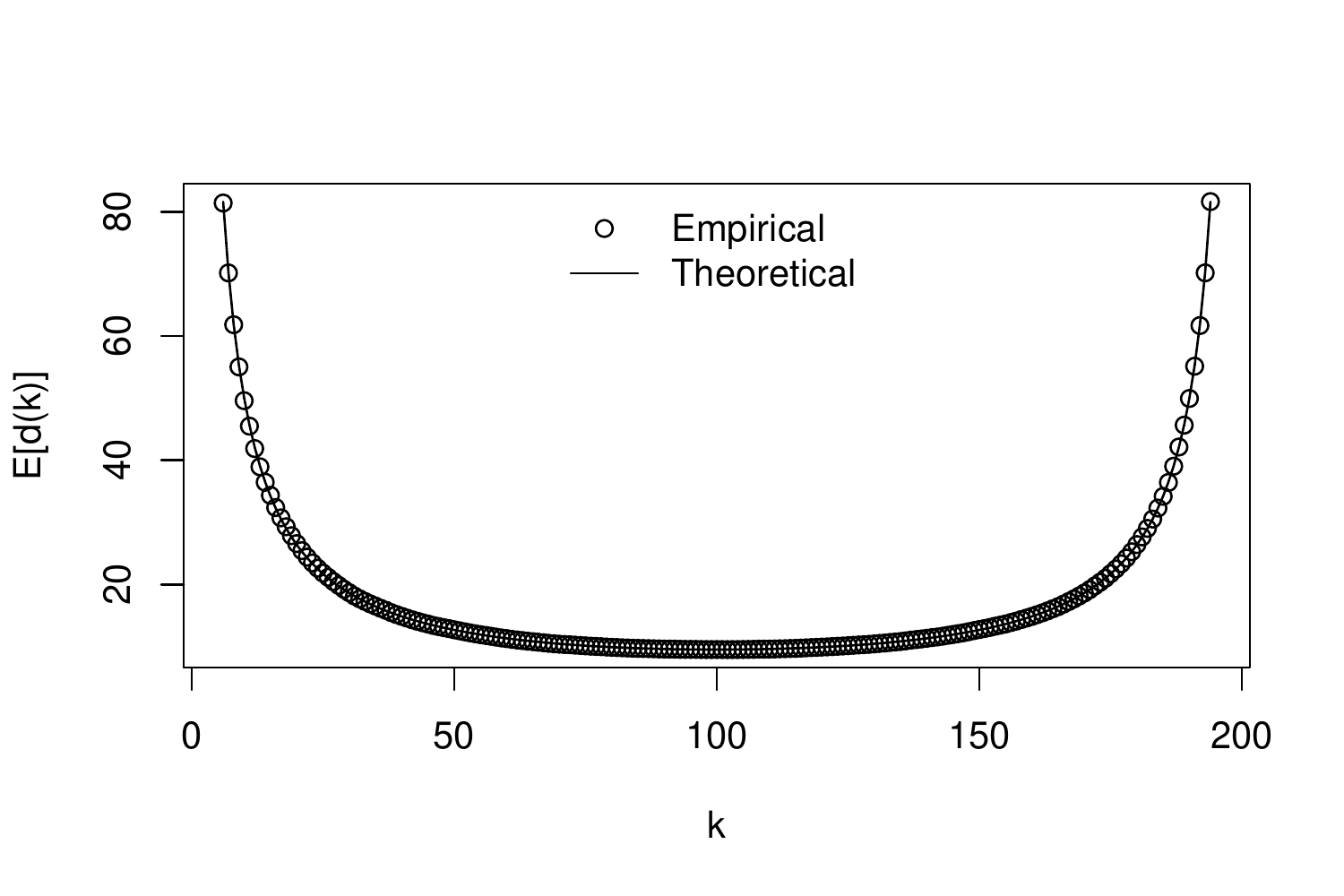}}
\caption{Difficulty of change point detection near the boundaries of data. With $T=200$, $n=20$, and $\Sigma=I_n$ (the identity matrix of order $n$), for each potential change point $1< k<  T$, we estimate $E[d(k)]$ by averaging $d(k)$ over $10000$ simulations under $H_0$ and show the location of expected values with markers. These empirical estimates are contrasted with the theoretical expectation, shown as a solid line, given by Theorem 1, Equation \eqref{expeq}.} \label{edktheoVSemp}
\end{figure}

\section{Simulation}

\subsection{The relationship between $T$ and $n$ for statistical power} \label{TVSn}

Estimation of the covariance matrix requires $T$ to be large relative to $n$ because the empirical covariance matrix has $n(n-1)/2$ elements that need to be estimated, so there is high variability in estimates if $T$ is small. If $T$ is too small, then even if a change point exists, the empirical covariance matrix may be so variable that the change point is undetectable. This problem is exacerbated when trying to detect change points near the boundary as discussed in Section \ref{boundarysection}.

While it is intuitive that $T$ needs to grow as some function of $n$ in order to maintain any reasonable statistical power to detect change points, it is unclear what that function of $n$ is. We investigate here further, using simulation, at what rate the number of longitudinal observations $T$ needs to grow with system size $n$ in order to maintain the same statistical power. We consider the case where a single change point occurs at the midpoint $\lfloor T/2\rfloor$, and $Y_j \sim MVN(0,\Sigma_1)$ for $j\leq T/2$ and $Y_j \sim MVN(0,\Sigma_2)$ for $j> T/2$ where:
$$
\Sigma_1=I_n \hspace{5pc}
\Sigma_2=\left[
\begin{array}{cc}
\underbrace{(1-\rho)I+\rho 11^T}_{4\times 4}	&	\underbrace{0}_{4 \times (n-4)}	\\
\underbrace{0}_{(n-4)\times 4}	&	\underbrace{I}_{(n-4) \times (n-4)}	\\
\end{array}
\right]
$$
where $\rho=0.9$. In other words, $\Sigma_2$ is a block or partitioned matrix with exchangeable correlation within the blocks on the diagonal, and 0s in the off-diagonal blocks. We simulate instances of $Y$ in this fashion 10000 times for each of $n=4,8,12$.

In Figure \ref{cpcombinedprobs} we compare the performance of the method, measured by the proportion of the 10000 iterations resulting in a statistically significant change point, described in Section \ref{ourmethod} for change point detection for the three different values of $n$. The asymmetry in Figure \ref{cpcombinedprobs} around the true change point is caused by having $\Sigma_1$ first followed by $\Sigma_2$. If the order of $\Sigma_1$ and $\Sigma_2$ is reversed, then the asymmetry will be reversed as well. We find that the probability of detecting the correct change point is the same for all $n$ if we increase $T$ by a quadratic rate in $n$ as $T(n) = n(n-1)+C$ for the constant $C$, a functional form we discovered by numerical exploration. In our simulations we considered the $\alpha=0.05$ significance level and $C=30$.  The intuition behind a quadratic rate is that as $n$ increases, the number of entries in the empirical covariance matrix increases quadratically and therefore the noise in the Frobenius norm increases quadratically. Increasing $T$ quadratically with $n$ appears to balance out the added noise for increasing the dimensions of the correlation matrices, and stabilizes the statistical power to detect the change point.  We would therefore recommend that if one wants to increase $n$, then there needs to be an associated increase in the number of observations that is quadratic in $n$ in order to retain the ability to detect a change point with the same power.

\begin{figure}[!ht]
\centerline{\includegraphics[scale=.9,trim=0 0.6cm 0 1.5cm,clip=true]{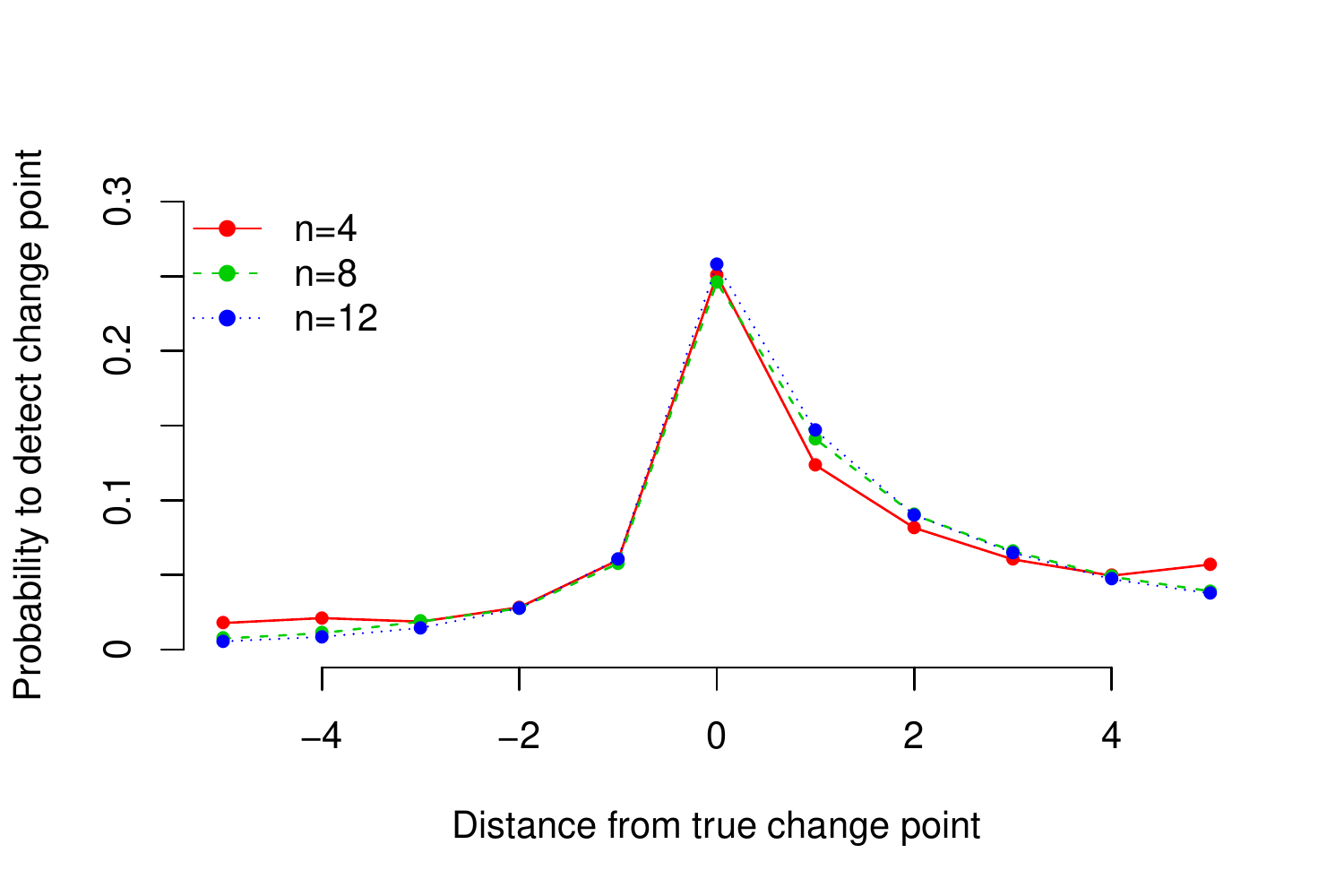}}
\caption{Change point detection statistical power as a function of $n$ and $T$. The $y$ axis represents the probability that a change point is detected at a particular time point. The $x$ axis is the distance of a time point in either direction from the true change point. These probabilities are estimated based on 10000 iterations for each $n$.} \label{cpcombinedprobs}
\end{figure}

\subsection{Comparison of different matrix norms} \label{normchoice}

Up until this point our proposed method has dealt with taking the Frobenius norm of the difference of empirical correlation matrices. The choice of the Frobenius norm was simply for algebraic simplicity of Theorem 1 (Equation \eqref{expeq}). Though it is more simple than many other matrix norms for such calculations, there is no reason to believe that the Frobenius norm is uniformly the best choice of matrix norm if the objective is to maximize the statistical power of change point detection. There may be some change points that the Frobenius norm is good at detecting, but there may be other change points for which a different matrix norm or distance metric would be more suitable. We investigate this question more closely in this Section.

Because the Frobenius norm sums the squared entries of a matrix, it is intuitive to expect that the Frobenius norm would be ideal for detecting change points in systems that demonstrate large-scale, network-wide changes in the correlation pattern. On the other hand, the Frobenius norm likely would not be very powerful in detecting small-scale local changes in correlation network structure. The rationale for this argument is that by summing over all the changes in the network structure, if there are very few changes relative to the entire network, then the Frobenius norm would be dominated by noise from the largely unchanged matrix elements.

We consider a different matrix norm, the Maximum norm, that is appealing for the case of small-scale, local changes. The Maximum norm of a matrix is simply the largest element of the matrix in absolute value. Intuitively, this norm would be ideal if there was just a single, but very large, change in the covariance matrix. If only one element of the covariance matrix changes, but the change is quite large, the Maximum norm would still be able to detect this change. Here the Frobenius norm would likely fail due to the sum of the all the changes being dominated by noise. The likelihood ratio test in Equation \eqref{LRteststat} is more similar to the Frobenius norm than the Maximum norm in that it utilizes all entries in the covariance matrix rather than using only one element. As a result, we may expect the likelihood-ratio distance metric to be more similar to the Frobenius norm in performance than it is to the Maximum norm.

We compare the Frobenius norm, the Maximum norm, and likelihood-ratio in Equation \eqref{LRteststat} through simulation with varying proportion of the network altered at the change point. To do this, we generated $Y_j$ from a multivariate normal distribution with $T=400$ and a single change point occurring at $t=200$. Prior to the change point $Y_j \sim MVN(0,\Sigma_1)$ for $j \leq t$ and after the change point  $Y_j \sim MVN(0,\Sigma_2)$ for $j > t$, where we modify the dimension of the upper-left block of $\Sigma_2$ to change the proportion of the network that is altered at the change point. In each case $\rho$ is selected such that the change point is detected with $50\%$ power using the Frobenius norm. This provides a reference for how the Frobenius norm compares with the Maximum norm and the likelihood-ratio. The results are displayed in Figure \ref{comparenorms}, which confirms our intuition. When a small proportion of the network is altered, the Maximum norm is more powerful at change detection than the Frobenius norm and likelihood ratio metric. When a large proportion of the network is altered at the change point, then the Frobenius norm and likelihood ratio metric are more powerful. While the likelihood ratio metric is more similar to the Frobenius norm than it is to the Maximum norm, it is still less sensitive to wide-spread subtle network changes than the Frobenius norm.

\begin{figure}[!ht] 
\centerline{\includegraphics[scale=.8,trim=0 0.6cm 0 1.5cm,clip=true]{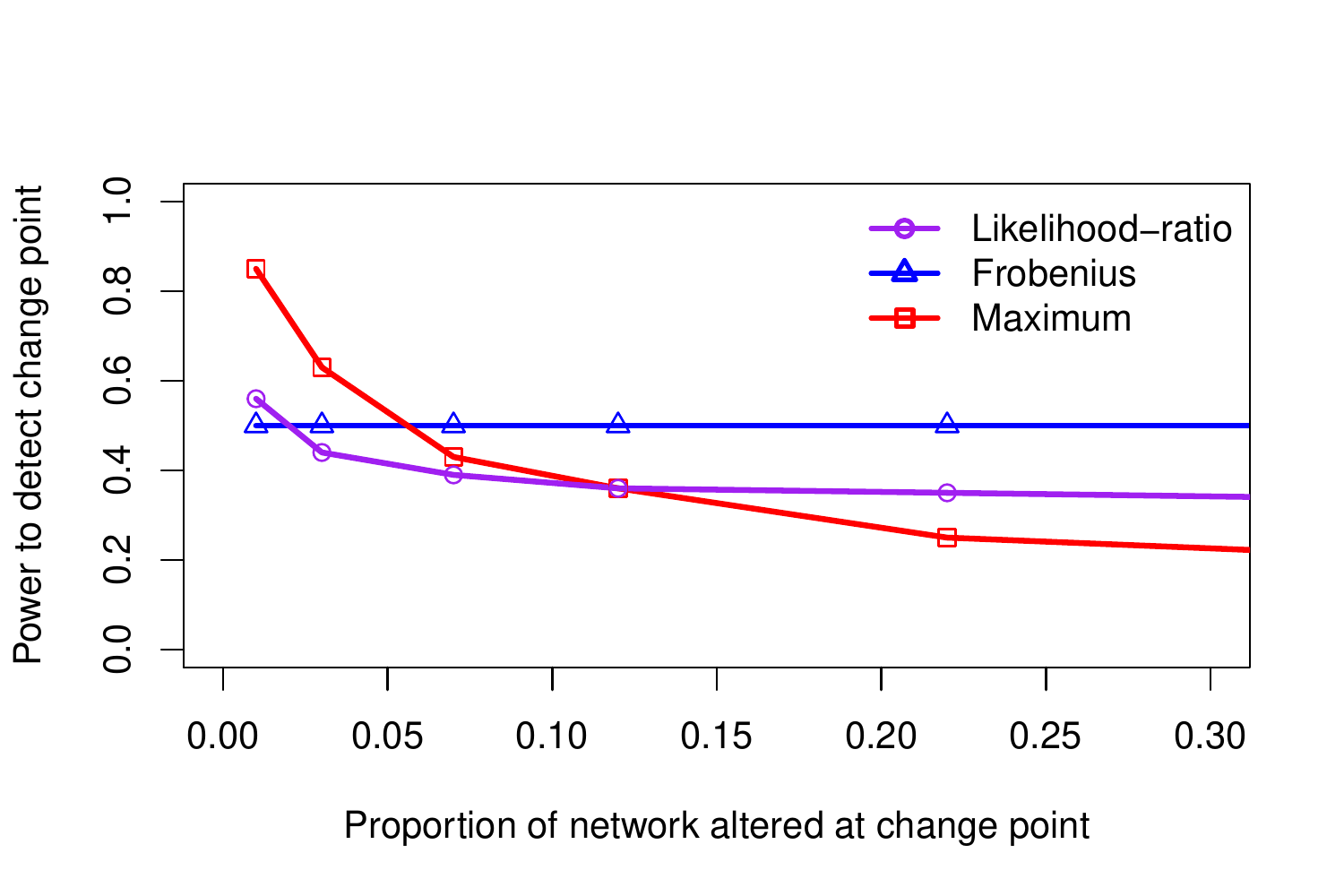}}
\caption{Power comparison for different matrix norms. For each point on the $x$-axis, a value of $\rho$ in the definition of $\Sigma_2$ is selected such that the Frobenius norm has $50\%$ power to detect a change point. As the proportion of the network altered at the change point increases, we adjust the value of $\rho$ correspondingly.} \label{comparenorms}
\end{figure}

These considerations naturally lead to the following question: which norm or metric should be used? The answer clearly depends on the anticipated nature of the change point, and is therefore difficult because often the nature of the change point is unknown. In fact, change point detection is used even when one is not sure a change point exists. If there is some \textit{a priori} knowledge of a type of change point perhaps specific to the problem at hand, then that information could be used to select an appropriate norm. For example, suppose we investigate a network constructed from stock return correlations and the time period under investigation happens to encompass a sudden economic recession. The moment the recession strikes, it is likely that there will be large-scale changes in the underlying network, and therefore the Frobenius norm might be a good choice.

One important issue that deserves emphasis is \textit{when} the choice of the norm to use should be made. It is very important that the choice of norm is made prior to looking at the data. If the analysis is performed multiple times repeatedly with different choices for the matrix norm, and the norm with the ``best'' results is selected, this would be deeply flawed and would invalidate the interpretation of the p-value. See \citet{gelman2013garden} for an informative discussion of the problem of inflated false positive rates that result when the choice of the specific statistical procedure to use, in this case the norm, is not made prior to all data analysis.

\subsection{Detecting multiple change points} \label{multiplesims}

It may be the case that more than one change point occurs in the data. In this case, the method described in Section \ref{ourmethod} can still be applied to search for additional change points by splitting the data into two segments at the first significant change point and then repeating the procedure on each segment separately.  This process is repeated recursively on segments split around significant change points until no additional statistically significant change points remain. Each test is performed at the $\alpha=0.05$ level (or at another user-specifid level). Though multiple comparisons may seem like a potential problem here, in fact there is no problem because further tests are only performed conditional on the previous change points being elected as statistically significant. This prevents the false positive rate from being inflated. A sliding window approach can also be used to detect multiple change points as is done in \citet{hawkins2003changepoint} and \citet{peel2014detecting}.

We investigate the performance of our method for detecting multiple change points through simulation. Consider the case where $T=400$ time points are observed for $n=10$ nodes in a network. Data follows a multivariate normal distribution with mean 0 and covariance $\Sigma_1$ for $1\leq t\leq 100$ and for $201 \leq t \leq 300$, but has covariance  $\Sigma_2$ for $101\leq t\leq 200$ and for $301 \leq t \leq 400$. We define $\Sigma_1$ and $\Sigma_2$ as in Section \ref{TVSn} with $\rho=0.9$, except that the upper left block of $\Sigma_2$ is $5 \times 5$ in this case. The probability that a change point is detected at each particular location is estimated from 10000 iterations and is shown in Figure \ref{multipleCPfig}. Statistical significance is assessed at the usual $\alpha=0.05$ significance level. We use the Frobenius norm here because changing between $\Sigma_1$ and $\Sigma_2$ constitutes large-scale change in the network.

\begin{figure}[!ht] 
\centerline{\includegraphics[scale=.8,trim=4cm 3.48cm 4cm 4.3cm,clip=true]{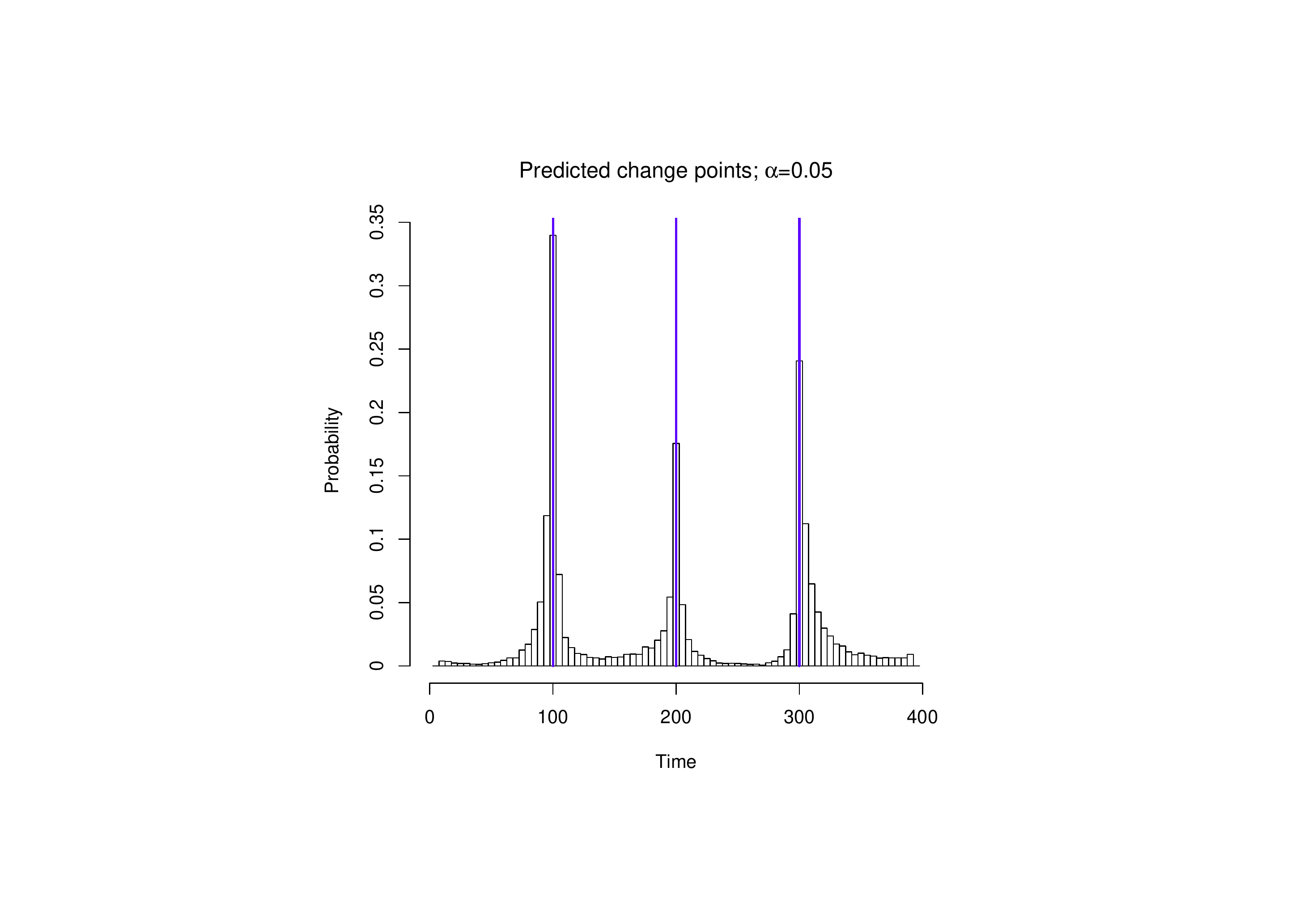}}
\caption{Detection of multiple change points. The $y$ axis represents the probability a change point is detected in a bin, each containing five adjacent time points, based on 10000 simulations total. The true locations of the change points are marked with vertical blue lines. } \label{multipleCPfig}
\end{figure}

As expected, the closer a location is to a change point, the more likely that location is found to be a statistically significant change point. However, there is an asymmetry in the ability to detect the different change points. The change point at $t=200$ is more difficult to detect than the change points at $t=100$ and $t=300$. This is because if we consider all $400$ data points and split them around $t=200$, the two resulting covariance matrices on either side of $t=200$ are expected to be the same because for both sets of $200$ observations, half are from $\Sigma_1$ and half are from $\Sigma_2$. This means we have almost no statistical power to detect a change point in the first iteration at $t=200$. Instead, the change points at $t=100$ and $t=300$ are picked up first. The reason the change point at $t=100$ is easier to detect than the one at $t=300$, despite each being equally far from its respective boundary, is because the data is ordered with the first $100$ observations generated from $\Sigma_1$ and the last $100$ from $\Sigma_2$. If this is reversed, then $t=300$ becomes the change point most likely to be detected. After the first change point is detected, power is reduced for the remaining change points due to the reduction in sample size that occurs due to dividing the data into smaller segments.

\section{Data Analysis}

\subsection{Correlation networks of stock returns}

Our first data analysis example deals with networks constructed from correlations of stock returns. Networks constructed from correlations of stock returns have been used in the past to investigate the correlation structure of markets as well as to detect changes in their structure \citep{mantegna1999hierarchical,onnela2003dynamic,onnela2003dynamics}. Here we use a data set first analyzed in \citet{onnela2004clustering} and apply our change point detection methods to it. 

A total of $n=114$ S\&P 500 stocks were followed from the beginning of 1982 to the end of 2000, keeping track of the stock price at closing for $T=4786$ trading days over that time period. This data is publicly available and had been gathered for analysis previously where correlation networks were constructed based on the correlation between log returns in moving time windows \citep{onnela2003dynamics}. If the price of the $i$th stock on the $j$th day is $P_{ij}$, then the corresponding log return is $R_{ij}=\log(P_{ij})-\log(P_{i,j-1)})$. The log returns did not demonstrate statistically significant autocorrelation (Durbin-Watson p-value of $0.11$) so the independent bootstrap was used.

Given that the stock market evolves constantly, and given the long time interval in the observed data, it may not be safe to assume that the correlation between the log returns of any two stocks stays fixed over time. If a correlation network were constructed by assuming edges between every two stocks with correlation greater than some threshold, and if all of the 19 years of data were used at once, the resulting network would likely be an inaccurate representation of the market if in fact the true underlying network changes with time \citep{onnela2006complex}. A more principled approach would be to first test for change points in the correlation network and then build multiple networks around those change points if necessary.

\begin{figure}[!ht] 
\centerline{\includegraphics[scale=.9,trim=0 0 0 0,clip=true]{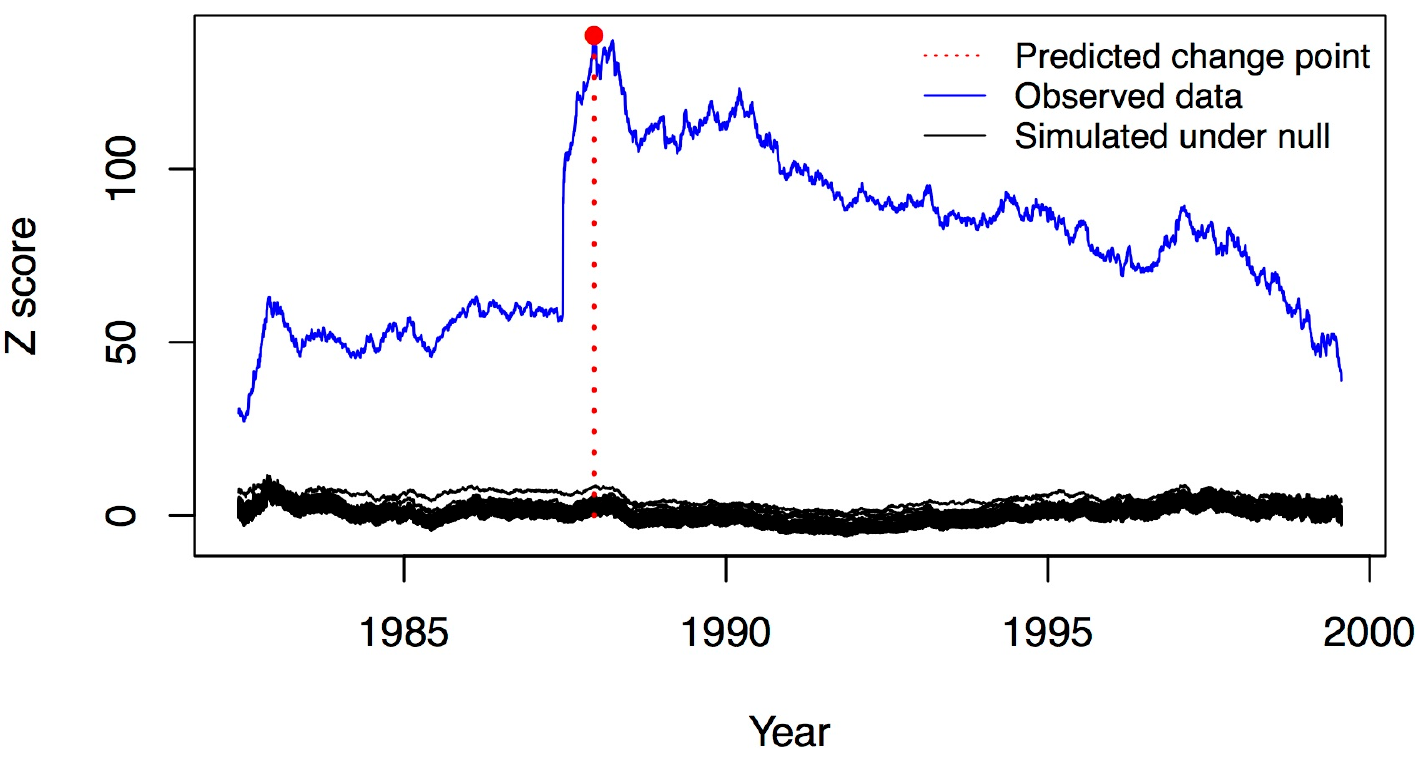}}
\caption{Change point detection in stock returns. With $n=116$ stocks tracked over $T=4786$ days ($\sim 19$ years), the blue line is the empirical z-score $z(k)$ while the clustered black lines are the z-scores simulated under $H_0$ using bootstrap as defined in Section \ref{ourmethod}. A significant change point is detected near the end of the year $1987$ corresponding to the well documented crash at the end of that year.} \label{StockPriceFig}
\end{figure}

Following the method proposed in Section \ref{ourmethod}, the stock price data is resampled 500 times assuming the null hypothesis of no change points, and the observed data is compared with these simulations to determine if and where a change point occurs. These simulation results are displayed in Figure \ref{StockPriceFig}. There is strong statistical evidence of a change point at the end of the year 1987 evidenced by a p-value $< 0.002$. The sieve bootstrap approach arrives at the same result, confirming the lack of temporal autocorrelation in log returns. We used the Frobenius norm as our goal was to find events that could lead to large-scale shocks to the correlation network. There were several other significant change points, but we focus on the first and most significant change point here. The stock market crash of October 1987, known as ``Black Monday'', coincides with the first detected change point \citep{onnela2003dynamic}. The stock market crash evidently drastically changed the relationship between many of the stocks leading to a stark change in the correlation network. For this reason it is advisable to consider the network of stocks before and after the stock market crash separately, as well as splitting the data further around potential additional significant change points, rather than lumping all of the data together to construct a single correlation network.

\subsection{Correlation networks of fMRI activity } \label{fmrisection}

Our second data analysis example deals with networks constructed from correlations in fMRI activity in the human brain. The Center for Cognitive Brain Imaging at Carnegie Mellon University collected fMRI data as part of the star/plus experiment for six individuals as they each completed a set of 40 trials \citep{mitchell2004learning}. Each trial took approximately 27 seconds to complete. The subjects were positioned inside an MRI scanner, and at the start of a trial, each subject was shown a picture for four seconds before it was replaced by a blank screen for another four seconds. Then a sentence making a statement about the picture just shown was displayed, such as ``The plus sign is above the star,'' and the subject had four seconds to press a button ``yes'' or ``no'' depending on whether or not the sentence was in agreement with the picture. After this the subject had an interstimulus period of no activity for 15 seconds until the end of the trial. We avoid referring to this as ``resting state'' due the reserved meaning of that label for extended periods of brain inactivity. Trials were repeated with different variations, such as the picture being presented first before the sentence, or with the sentence contradicting the picture. MRI images were recorded every $0.5$ seconds, for a total of about 54 images over the course of a trial, corresponding to a total of $40\times 54=2160$ images total. Each image was partitioned into 4698 voxels of width 3mm. The study data are publicly available \citep{cmufMRIwebsite}.

%\begin{figure}[!ht] 
%\centerline{\includegraphics[scale=.8,trim=0 0.5cm 0 1cm,clip=true]{}}
%\caption{Detecting change points in fMRI data. The z-scores $z(k)$ as defined in Section \ref{ourmethod} are plotted against time $k$. The first two seconds and last two seconds of the trial are ignored which corresponds to $\Delta = 4$, given that the images are taken every $0.5$ seconds.} \label{fMRIzscorefig}
%\end{figure}

If we were to analyze a single trial, change point detection would be quite difficult for the data in its raw form  for $n= 4698$ voxels which is very large compared with the number of data points $T=54$. Any empirical covariance matrix for these values of $n$ and $T$ would be too noisy to detect any statistically significant change point. We therefore combine our analysis on the eight trials  where the picture is presented first and the sentence agrees with the picture for all six individuals. To accommodate the repeated trials in our correlation estimates, we define $S_{lk}(i,j)$ to be the covariance estimator in Equation \eqref{empcor} for the $l$th individual and $k$th trial only. The resulting covariance matrix averaged over all the trials and individuals is

$$S^*(i,j) = \sum_{l=1}^6\sum_{k=1}^8 S_k(i,j)/48$$

Change point detection is then performed as before except using the $S^*(i,j)$ instead of the usual $S(i,j)$. Even though we have effectively increased the amount of data 48-fold by combining multiple trials and individuals together, the number of observed data points is still far fewer than the $n=4698$ voxels. To reduce the number of nodes to a manageable size, we group the voxels into 24 distinct regions of interest (ROIs) in the brain following \citet{hutchinson2009modeling}, and we average the signals over all voxels within the same ROI. With 24 nodes and $54 \times 48 = 2592$ data points, empirical covariance matrices can be estimated with sufficient accuracy to detect change points in the network of ROIs so long as the change points occur sufficiently far from the beginning or end of the trial (see Section \ref{boundarysection}). Under the assumption that the network is drastically different when comparing the interstimulus state to the active state, we use the Frobenius norm. For each of the 6 individuals there was a significant presence of autocorrelation (Durbin-Watson p-value $<0.001$), so the sieve bootstrap is used for inference. First order autocorrelation ($s=1$) was used as it minimized mean squared error in cross-validation.

The most significant change point occurred $t=12$ seconds into the trial, though it was not statistically significant (p-value of $0.18$). This indicates that the network of interactions in the first part of the trial when the subject is actively reading, visualizing, responding, and connecting stimuli is most different from the interstimulus portion of the trial, though the difference is not statistically significant. If autocorrelation is ignored and instead independence is assumed, then the same change point appears significant with a p-value less than $0.001$. This example demonstrates how ignoring autocorrelation in the data can lead to an inflation of the false positive rate.

\begin{figure}[t!] 
\centerline{\includegraphics[scale=1,trim=0 1cm 0 1cm ,clip=true]{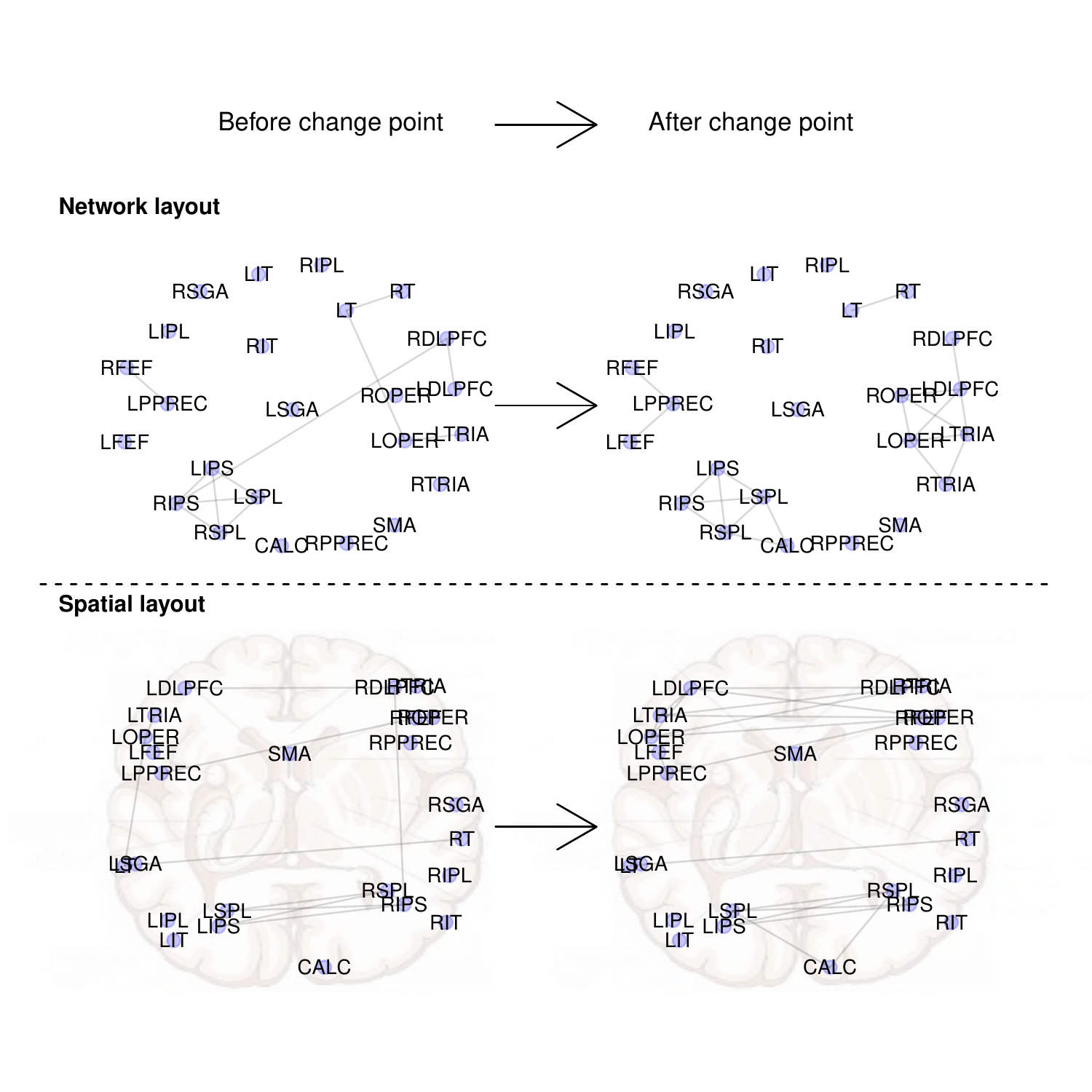}}
\caption{Brain region of interest (ROI) networks before and after the most likely change point. The network transition around the most likely change point are displayed with two different layouts. In the top panel, nodes are positioned to best display the pre-change point network topology. Those same node positions are used in the post-change point network in the top right. The networks on the bottom have nodes positioned according to their Talaraich coordinates \citep{lancaster2000automated} that accurately represent their anatomical location in the brain.} \label{fMRInetworksfig}
\end{figure}

Though we fail to reject the null hypothesis and find no statistically significant change point at the $\alpha=0.05$ level, we examine the position of the most likely change point, if one exists, at $t=12$. We construct two networks between the ROIs, one from $S^*(1,24)$ corresponding to the first 12 seconds of the trial (recall that $i$ and $j$ in $S^*(i,j)$ index time points that are $0.5$ seconds apart) and one from $S^*(25,54)$ corresponding to the remaining 15 seconds of inactivity in the trial. The networks are constructed such that an edge is shown between two nodes if and only if their pairwise correlation is greater than 0.5 in absolute value. The two networks are displayed in Figure \ref{fMRInetworksfig}. The correlation threshold to determine if an edge is present was selected such that the network after the change point had $20$ edges, and the same threshold was used for both before and after networks. During the inactive period after $t=12$ there is an increase in connectivity in the network. An explanation for why the most likely change point occurs at $t=12$ could be because behavior after the change point corresponds with constant inactivity, whereas before the change point there is a mixture of inactivity, thinking, decisions, and other mental activity likely taking place. These different mental activities could dampen the observed correlations when averaged all together. Given the many different tasks that occur over the course of the trial, there almost certainly exist numerous changes in the correlation structure of brain activity. The lack of statistical significance in this example helps to illustrate the phenomenon discussed in section \ref{TVSn}; even if change points exist, as is almost certainly the case here due to the numerous mental tasks required over the course of a trial, there is no hope of discovering them unless the number of observations is far greater than the size of the network ($T \gg n$).

\section{Discussion}

In this paper, an existing change point detection method was adapted to correlation networks using a computational framework. Many past treatments of change point detection make a distributional assumption on the observed characteristics, but our framework utilizes the bootstrap in order to avoid this restriction. Traditional methods also assume independence between observations and upon first glance this assumption seems unreasonable. For instance, consider the stock market data. Stock prices are often modeled as a Markov process, which implies a strong autocorrelation between consecutive observations. For this reason the stock prices themselves cannot be used as input for our algorithm, but rather the log returns are used. Similar to the random noise in a Markov chain being independent, the assumption of the returns being independent is more reasonable, as was also demonstrated by the Watson-Durbin test. The fMRI voxel intensities, however, demonstrated significant autocorrelation and required the sieve bootstrap procedure.

We extended our framework to allow for multiple change points. If the first change point is found to be statistically significant, then the data is split into two parts on either side of the change point and the algorithm is repeated for each subset. This process of splitting the data around significant change points continues until there are no more significant change points. The fMRI data analysis in Section \ref{fmrisection} found no significant change points but, due to the many changes in stimuli, there are likely multiple points in time where the structure of interaction between regions of interest in the brain changes. This negative result could likely be remedied by collecting higher temporal frequency imaging data. With each split of the data, $T$ is approximately halved while $n$ remains the same and, as shown in Section \ref{TVSn}, this further lowers the power to detect change points. As long as the data have sufficient temporal resolution, it should be possible to use the proposed framework to detect multiple change points in correlation networks across different domains.

\section{Appendix: Proof of Theorem 1}

\subsection{Proof of Theorem 1} \label{expeqappendix}
The proof starts by writing the Frobenius norm as a sum over all pairs of observations, and then takes expectation, making use of what we know of the first two moments of a quadratic form of normally distributed variables. The following is the derivation of $E[d(k)]$ under $H_0$ for normally distributed observations:
\[\begin{array}{rl}
d(k) =& \tr\{[S(1,k)-S(k+1,T)]^T[S(1,k)-S(k+1,T)]\}	\\
	=&	\tr\{[S(1,k)-S(k+1,T)]^2\}	\\ 	
	=& \tr\{(Y^TY\underbrace{(D(1,k)-D(k+1,T))}_{C})^2\}	\\
	=& \sum_{i=1}^T\sum_{j=1}^T (Y_i^TY_j)^2C_{jj}C_{ii}	\\
	=& \sum_{i=1}^T\left\{(Y_i^TY_i)^2C_{ii}^2+\sum_{j \in \{1,...,i-1,i+1,...,T\}}(Y_i^TY_j)^2C_{ii}C_{jj}\right\}	\\
\end{array}\]
Taking expectation gives

\[\begin{array}{rl}
E[d(k)]	=&	\sum_{i=1}^T\left\{E[(Y_i^TY_i)^2]C_{ii}^2+\sum_{j \in \{1,...,i-1,i+1,...,T\}}E[(Y_i^TY_j)^2]C_{ii}C_{jj}\right\}	\\
	=& \left(\frac{1}{k}+\frac{1}{T-k}\right)E[(Y_i^TY_i)^2] + \sum_{i=1}^k \left(\frac{k-1}{k^2}-\frac{1}{k}\right)E[(Y_i^TY_j)^2] \\
	&+ \sum_{i=k+1}^T \left(\frac{T-k-1}{(T-k)^2}-\frac{1}{T-k}\right)E[(Y_i^TY_j)^2]	\\
	=& \left(\frac{1}{k}+\frac{1}{T-k}\right)E[(Y_i^TY_i)^2] + k \left(\frac{k-1}{k^2}-\frac{1}{k}\right)E[(Y_i^TY_j)^2]	\\
	& +(T-k)\left(\frac{T-k-1}{(T-k)^2}-\frac{1}{T-k}\right)E[(Y_i^TY_j)^2]	\\
	=& \left(\frac{1}{k}+\frac{1}{T-k}\right)E[(Y_i^TY_i)^2] + \left(\frac{k-1}{k}-1+\frac{T-k-1}{T-k}-1\right)E[(Y_i^TY_j)^2]	\\
	=& \left(\frac{1}{k}+\frac{1}{T-k}\right)(2\tr(\Sigma^2)+\tr(\Sigma)^2) + \left(\frac{k-1}{k}+\frac{T-k-1}{T-k}-2\right)\tr(\Sigma^2)	\\
	=& \left(\frac{1}{k} + \frac{1}{T-k}\right)(\tr(\Sigma^2)+\tr(\Sigma)^2)	\\
\end{array}\]
The last line is the result of Theorem 1 in Equation \eqref{expeq}.

\bibliography{CPbib}

\begin{thebibliography}{28}
\providecommand{\natexlab}[1]{#1}
\providecommand{\url}[1]{\texttt{#1}}
\expandafter\ifx\csname urlstyle\endcsname\relax
  \providecommand{\doi}[1]{doi: #1}\else
  \providecommand{\doi}{doi: \begingroup \urlstyle{rm}\Url}\fi

\bibitem[Akoglu and Faloutsos(2010)]{akoglu2010event}
Leman Akoglu and Christos Faloutsos.
\newblock Event detection in time series of mobile communication graphs.
\newblock In \emph{Army Science Conference}, pages 77--79, 2010.

\bibitem[Anderson(1984)]{andersonw}
T.W. Anderson.
\newblock \emph{An Introduction to Multivariate Statistical Analysis}.
\newblock New York: Wiley, 1984.

\bibitem[Bhan et~al.(2002)Bhan, Galas, and Dewey]{bhan2002duplication}
Ashish Bhan, David~J Galas, and T~Gregory Dewey.
\newblock A duplication growth model of gene expression networks.
\newblock \emph{Bioinformatics}, 18\penalty0 (11):\penalty0 1486--1493, 2002.

\bibitem[B{\"u}hlmann et~al.(1997)]{buhlmann1997sieve}
Peter B{\"u}hlmann et~al.
\newblock Sieve bootstrap for time series.
\newblock \emph{Bernoulli}, 3\penalty0 (2):\penalty0 123--148, 1997.

\bibitem[Gelman and Loken(2013)]{gelman2013garden}
Andrew Gelman and Eric Loken.
\newblock The garden of forking paths: Why multiple comparisons can be a
  problem, even when there is no “fishing expedition” or “p-hacking”
  and the research hypothesis was posited ahead of time.
\newblock \emph{Downloaded January}, 30:\penalty0 2014, 2013.

\bibitem[Hawkins and Zamba(2005)]{hawkins2005change}
Douglas~M Hawkins and KD~Zamba.
\newblock A change-point model for a shift in variance.
\newblock \emph{Journal of Quality Technology}, 37\penalty0 (1):\penalty0
  21--31, 2005.

\bibitem[Hawkins et~al.(2003)Hawkins, PeiHua, and
  Chang]{hawkins2003changepoint}
Douglas~M Hawkins, Qiu PeiHua, and Wook~Kang Chang.
\newblock The changepoint model for statistical process control.
\newblock \emph{Journal of Quality Technology}, 35\penalty0 (4):\penalty0
  355--366, 2003.

\bibitem[Hutchinson et~al.(2009)Hutchinson, Niculescu, Keller, Rustandi, and
  Mitchell]{hutchinson2009modeling}
Rebecca~A Hutchinson, Radu~Stefan Niculescu, Timothy~A Keller, Indrayana
  Rustandi, and Tom~M Mitchell.
\newblock Modeling fmri data generated by overlapping cognitive processes with
  unknown onsets using hidden process models.
\newblock \emph{NeuroImage}, 46\penalty0 (1):\penalty0 87--104, 2009.

\bibitem[Just(2001)]{cmufMRIwebsite}
Marcel Just.
\newblock Starplus fmri data, 2001.
\newblock URL \url{http://www.test.org/doe/}.

\bibitem[Keightley et~al.(2003)Keightley, Winocur, Graham, Mayberg, Hevenor,
  and Grady]{keightley2003fmri}
Michelle~L Keightley, Gordon Winocur, Simon~J Graham, Helen~S Mayberg,
  Stephanie~J Hevenor, and Cheryl~L Grady.
\newblock An fmri study investigating cognitive modulation of brain regions
  associated with emotional processing of visual stimuli.
\newblock \emph{Neuropsychologia}, 41\penalty0 (5):\penalty0 585--596, 2003.

\bibitem[Kose et~al.(2001)Kose, Weckwerth, Linke, and
  Fiehn]{kose2001visualizing}
Frank Kose, Wolfram Weckwerth, Thomas Linke, and Oliver Fiehn.
\newblock Visualizing plant metabolomic correlation networks using
  clique--metabolite matrices.
\newblock \emph{Bioinformatics}, 17\penalty0 (12):\penalty0 1198--1208, 2001.

\bibitem[Lancaster et~al.(2000)Lancaster, Woldorff, Parsons, Liotti, Freitas,
  Rainey, Kochunov, Nickerson, Mikiten, and Fox]{lancaster2000automated}
Jack~L Lancaster, Marty~G Woldorff, Lawrence~M Parsons, Mario Liotti,
  Catarina~S Freitas, Lacy Rainey, Peter~V Kochunov, Dan Nickerson, Shawn~A
  Mikiten, and Peter~T Fox.
\newblock Automated talairach atlas labels for functional brain mapping.
\newblock \emph{Human brain mapping}, 10\penalty0 (3):\penalty0 120--131, 2000.

\bibitem[Lindquist et~al.(2007)Lindquist, Waugh, and
  Wager]{lindquist2007modeling}
Martin~A Lindquist, Christian Waugh, and Tor~D Wager.
\newblock Modeling state-related fmri activity using change-point theory.
\newblock \emph{Neuroimage}, 35\penalty0 (3):\penalty0 1125--1141, 2007.

\bibitem[Lindquist et~al.(2008)]{lindquist2008statistical}
Martin~A Lindquist et~al.
\newblock The statistical analysis of fmri data.
\newblock \emph{Statistical Science}, 23\penalty0 (4):\penalty0 439--464, 2008.

\bibitem[Lowry et~al.(1992)Lowry, Woodall, Champ, and
  Rigdon]{lowry1992multivariate}
Cynthia~A Lowry, William~H Woodall, Charles~W Champ, and Steven~E Rigdon.
\newblock A multivariate exponentially weighted moving average control chart.
\newblock \emph{Technometrics}, 34\penalty0 (1):\penalty0 46--53, 1992.

\bibitem[Mantegna(1999)]{mantegna1999hierarchical}
Rosario~N Mantegna.
\newblock Hierarchical structure in financial markets.
\newblock \emph{The European Physical Journal B-Condensed Matter and Complex
  Systems}, 11\penalty0 (1):\penalty0 193--197, 1999.

\bibitem[McCulloh and Carley(2011)]{mcculloh2011detecting}
Ian McCulloh and Kathleen~M Carley.
\newblock Detecting change in longitudinal social networks.
\newblock Technical report, DTIC Document, 2011.

\bibitem[Mitchell et~al.(2004)Mitchell, Hutchinson, Niculescu, Pereira, Wang,
  Just, and Newman]{mitchell2004learning}
Tom~M Mitchell, Rebecca Hutchinson, Radu~S Niculescu, Francisco Pereira, Xuerui
  Wang, Marcel Just, and Sharlene Newman.
\newblock Learning to decode cognitive states from brain images.
\newblock \emph{Machine Learning}, 57\penalty0 (1-2):\penalty0 145--175, 2004.

\bibitem[Mizuno et~al.(2006)Mizuno, Takayasu, and
  Takayasu]{mizuno2006correlation}
Takayuki Mizuno, Hideki Takayasu, and Misako Takayasu.
\newblock Correlation networks among currencies.
\newblock \emph{Physica A: Statistical Mechanics and its Applications},
  364:\penalty0 336--342, 2006.

\bibitem[Onnela et~al.(2003{\natexlab{a}})Onnela, Chakraborti, Kaski, and
  Kertesz]{onnela2003dynamic}
J-P Onnela, Anirban Chakraborti, Kimmo Kaski, and Janos Kertesz.
\newblock Dynamic asset trees and black monday.
\newblock \emph{Physica A: Statistical Mechanics and its Applications},
  324\penalty0 (1):\penalty0 247--252, 2003{\natexlab{a}}.

\bibitem[Onnela et~al.(2003{\natexlab{b}})Onnela, Chakraborti, Kaski, Kertesz,
  and Kanto]{onnela2003dynamics}
J-P Onnela, Anirban Chakraborti, Kimmo Kaski, Janos Kertesz, and Antti Kanto.
\newblock Dynamics of market correlations: Taxonomy and portfolio analysis.
\newblock \emph{Physical Review E}, 68\penalty0 (5):\penalty0 056110,
  2003{\natexlab{b}}.

\bibitem[Onnela et~al.(2004)Onnela, Kaski, and
  Kert{\'e}sz]{onnela2004clustering}
J-P Onnela, Kimmo Kaski, and Janos Kert{\'e}sz.
\newblock Clustering and information in correlation based financial networks.
\newblock \emph{The European Physical Journal B-Condensed Matter and Complex
  Systems}, 38\penalty0 (2):\penalty0 353--362, 2004.

\bibitem[Onnela et~al.(2012)Onnela, Fenn, Reid, Porter, Mucha, Fricker, and
  Jones]{onnela2012taxonomies}
J-P Onnela, Daniel~J Fenn, Stephen Reid, Mason~A Porter, Peter~J Mucha, Mark~D
  Fricker, and Nick~S Jones.
\newblock Taxonomies of networks from community structure.
\newblock \emph{Physical Review E}, 86\penalty0 (3):\penalty0 036104, 2012.

\bibitem[Onnela et~al.(2006)]{onnela2006complex}
J-P Onnela et~al.
\newblock \emph{Complex networks in the study of financial and social systems}.
\newblock Helsinki University of Technology, 2006.

\bibitem[Peel and Clauset(2014)]{peel2014detecting}
Leto Peel and Aaron Clauset.
\newblock Detecting change points in the large-scale structure of evolving
  networks.
\newblock \emph{arXiv preprint arXiv:1403.0989}, 2014.

\bibitem[Tang et~al.(2013)Tang, Park, Lee, and Priebe]{tang2013attribute}
Minh Tang, Youngser Park, Nam~H Lee, and Carey~E Priebe.
\newblock Attribute fusion in a latent process model for time series of graphs.
\newblock \emph{IEEE Transactions on Signal Processing}, 61\penalty0
  (7):\penalty0 1721--1732, 2013.

\bibitem[Zamba(2009)]{zamba2009multivariate}
KD~Zamba.
\newblock A multivariate change point model for change in mean vector and/or
  covariance structure: Detection of isolated systolic hypertension (ish).
\newblock Technical report, University of Iowa, College of Public Health, 2009.

\bibitem[Zamba and Hawkins(2006)]{zamba2006multivariate}
KD~Zamba and Douglas~M Hawkins.
\newblock A multivariate change-point model for statistical process control.
\newblock \emph{Technometrics}, 48\penalty0 (4):\penalty0 539--549, 2006.

\end{thebibliography}

\end{document}